\documentclass{article}
\def\BibTeX{{\rm B\kern-.05em{\sc i\kern-.025em b}\kern-.08em
    T\kern-.1667em\lower.7ex\hbox{E}\kern-.125emX}}
\usepackage{graphics}
\usepackage{color}
\usepackage{url}
\usepackage{subcaption}
\usepackage{latexsym}
\usepackage{amsfonts,amssymb,amsmath}
\usepackage[dvips]{graphicx}
\usepackage{epsfig}
\usepackage{verbatim}   % provides comment and verbatim commands that work!
\usepackage{amsmath}
\usepackage{amssymb}
\usepackage{fancyhdr}
\usepackage[driver=pdftex]{geometry}
\usepackage{algorithmic}
\usepackage{algorithm}
\usepackage[utf8]{inputenc}
\usepackage{amsthm}
\allowdisplaybreaks
\usepackage{mathrsfs}
\usepackage[normalem]{ulem}
\usepackage{array}
\usepackage{tikz}
\usepackage{quantikz}
\usepackage[hidelinks]{hyperref}
\usepackage[nameinlink]{cleveref}
\usepackage{multicol}
\usepackage{cuted} 
\usepackage{booktabs}

\crefname{theorem}{thm.}{thm.}
\crefname{section}{sec.}{sec.}
\crefname{corollary}{cor.}{cor.}
\crefname{lemma}{lemma}{lemma}
\Crefname{theorem}{Thm.}{Thm.}
\Crefname{section}{Sec.}{Sec.}
\Crefname{figure}{Fig.}{Fig.}
\Crefname{equation}{Eqn.}{Eqn.}
\newtheorem{definition}{Definition}
\newtheorem{theorem}{Theorem}
\newtheorem{lemma}{Lemma}

\newtheorem{corollary}{Corollary}

\newcommand{\ev}{\mathbf{e}}
\newcommand{\bv}{\mathbf{b}}
\newcommand{\xv}{\mathbf{x}}

\newcommand{\psiv}{\mathbf{\psi}}
\newcommand{\phiv}{\mathbf{\phi}}

\newcommand{\Zev}{\mathbf{0}}

\newcommand{\Am}{\mathbf{A}}

\newcommand{\Zm}{\mathbf{Z}}
\newcommand{\Um}{\mathbf{U}}

\newcommand{\In}{\mathbf{I}}
\newcommand{\Id}{\mathbf{I}}

\newcommand{\Vm}{\mathbf{V}}

\newcommand{\bm}{\mathbf{b}}
\newcommand{\Dm}{\mathbf{D}}

\newcommand{\Sp}{\sigma_+}
\newcommand{\Sm}{\sigma_-}

\newcommand{\thetav}{\mathbf{\theta}}
\newcommand{\ra}{\rangle}
\newcommand{\la}{\langle}

\newcommand{\uv}{\mathbf{u}}

\usepackage{authblk}
\title{Efficient Variational Quantum Linear Solver for Structured Sparse Matrices}
\author[a,1]{Abeynaya Gnanasekaran}
\author[b]{Amit Surana}
\affil[a]{RTX Technology Research Center (RTRC), Berkeley, California, USA.}
\affil[a]{RTX Technology Research Center (RTRC), East Hartford, Connecticut, USA.}
\affil[1]{Corresponding author: abeynaya.gnanasekaran@rtx.com}
\begin{document}
\maketitle

\begin{abstract}
We develop a novel approach for efficiently applying variational quantum linear solver (VQLS) in context of structured sparse matrices. Such matrices frequently arise during numerical solution  of partial differential equations which are ubiquitous in science and engineering. Conventionally, Pauli basis is used for linear combination of unitary (LCU) decomposition of the underlying matrix to facilitate the evaluation the global/local VQLS cost functions. However, Pauli basis in worst case can result in number of LCU terms that scale quadratically with respect to the matrix size. We show that by using an alternate basis one can better exploit the sparsity and underlying structure of matrix leading to number of tensor product terms which scale only logarithmically with respect to the matrix size. Given this new basis is comprised of non-unitary operators, we employ the concept of unitary completion to design efficient quantum circuits for computing the global/local VQLS cost functions. We compare our approach with other related concepts in the literature including unitary dilation and measurement in Bell basis, and discuss its pros/cons while using VQLS applied to Heat equation as an example.  
\end{abstract}

\section{Introduction}
Variational quantum algorithms (VQAs) are hybrid classical-quantum algorithms that have emerged as promising candidates to optimally utilize today's Noisy Intermediate Scale Quantum (NISQ) devices. VQAs use a quantum subroutine to evaluate a cost/ energy function using a shallow parameterized quantum circuit. A classical optimizer is used to update the parameters using the output measurement statistics. Typically, the quantum subroutine performs the most expensive part of the overall algorithm to achieve quantum advantage over a fully classical algorithm. VQAs provide a versatile framework to tackle a wide variety of problems. As such, they have been successfully used in a wide variety of applications such as finding ground and excited states of Hamiltonians~\cite{app1,app2,app3,app4}, dynamical quantum simulation~\cite{ds1,ds2}, combinatorial optimization~\cite{c1,c2}, solving linear systems~\cite{VQLS,ls2,ls3}, integer factorization~\cite{f1}, principal component analysis~\cite{pca1,pca2,pca3}  and quantum machine learning \cite{biamonte2017quantum}.

Despite tremendous growth and development in the field of VQAs, many challenges remain for obtaining quantum speedups when scaling the system size on NISQ devices. The three main challenges include trainability of the quantum circuit, efficiency of cost function evaluations and its accuracy including the effect of noise. (1) Trainability: The presence of barren plateau in the cost function landscape is one of the main bottlenecks in VQAs~\cite{bp1}. Exponential precision is required to escape barren plateau requiring exponential computational scaling. The use of structured ansatz~\cite{bp_an1,bp_an2}, local cost functions~\cite{bp2,pca3}, optimal parameter initialization~\cite{bp_param}, and tailored optimizers~\cite{bp_opt} have been shown to address this challenge. (2) Efficiency: Cost function evaluations are key to VQAs and efficiently estimating them is essential to demonstrate quantum benefits. Since, the cost function Hamiltonians are not necessarily unitary operators, a common technique is to decompose them into a linear combination unitaries (LCU) formed from tensor product of Pauli operators. However, this decomposition can contain large number of terms, requiring several circuit measurements to evaluate the cost function, thereby reducing the  overall efficiency. The use of commuting operator sets~\cite{comm}, optimized measurement sampling~\cite{samp}, classical shadows~\cite{shad} and neural network tomography~\cite{tomo} have been proposed to improve efficiency. (3) Accuracy: While VQAs are designed to use for NISQ devices, hardware noise can still slow the training process and modify the global optimum. Inherent noise resilience of VQAs is an active area of research \cite{comm,co_exp,zne,zne_pnc,trex}.  

In this work, we focus on improving the efficiency of VQAs by reducing the number of circuit evaluations/measurements required for evaluating the cost function. A common approach to deal with this issue is to partition the LCU terms into subsets of commuting operators that are simultaneously measurable~\cite{comm}.In contrast, the authors in~\cite{liu2021variational} developed an efficient VQA for solving the Poisson equation using the Variational Quantum Linear Solver (VQLS) framework. They develop a novel decomposition of the Possion coefficient matrix under a set of simple (non-unitary) operators and construct observables to evaluate the expectation values of those operators on a quantum computer. Through this approach, they demonstrate an exponential reduction in the number of terms needed in tensor product decomposition as compared to the Pauli basis for the Poisson coefficient matrix. They also extend their technique to tridiagonal and pentadiagonal Toeplitz matrices that commonly arise in solving partial differential equations (PDEs). We highlight some drawbacks of this technique:
\begin{itemize}
\item Local cost functions are necessary to reduce the effect of barren plateaus and improve trainability of the circuit. However, there is no straightforward extension of their technique to compute terms in the \textit{local} VQLS cost function.
\item The quantum circuit to compute the expectation values requires the measurement of every qubit increasing the measurement overhead.
\end{itemize}
In this work, we address these drawbacks by developing a novel quantum circuit to evaluate the expectation values of these non-unitary operators for both the global and local VQLS cost functions. 

The rest of the paper is organized as follows. \Cref{sec: vqls} gives an overview of the VQLS framework, cost Hamiltonians and the LCU technique. In \Cref{sec: sigma}, we discuss efficient decomposition under the set of simple non-unitary operators by considering a model linear system arising from discretization of the heat equation. In \Cref{sec: method}, we develop our novel approach for evaluating the cost functions and provide details on the circuit construction and resource estimation. In \Cref{sec: discussion} we compare our approach to other related techniques from the literature, and finally conclude in the \Cref{sec: conc}.

\paragraph*{Notation} We will denote by $\mathbb{R}$ as the set of real numbers, $\mathbb{C}$ as the set of complex numbers,  small bold letters as vectors, capital bold letters as matrices/operators,  $\Am^*$ as the vector/matrix complex conjugate, $\Am^T$ as the vector/matrix transpose and $\Id_s$ as an Identity matrix of size $s \times s$. We will use standard braket notation in representing the quantum states. We use $C^n X$ to represent the \textit{n}-controlled Toffoli gate. With this notation, $C^1 X$ is the CNOT gate and $C^2 X$ is the Toffoli gate or CCNOT gate. 

\section{Variational Quantum Linear Solver}
\label{sec: vqls}

VQLS is a variational algorithm that solves linear system of the form $\Am \xv = \bv$ where $\Am$ is a $N\times N$ complex valued matrix and $\bv\in \mathbb{C}^N$ \cite{VQLS}. Without loss of generality we will assume $N=2^n$ for some integer $n$. Given an efficient unitary operator $\Um$ that prepares the quantum state $\ket{\bm}$ and a LCU decomposition of $\Am$ in terms of $n_l$ unitary operators $\Am_l$, i.e., 
\begin{equation}\label{eq:LCU}
\Am = \sum_{l=1}^{n_l} \alpha_l \Am_l,
\end{equation}
the VQLS algorithm outputs a state $\ket{\psi}$ that is approximately proportional to the solution of the linear system. To accomplish this, the algorithm encodes the solution through an ansatz $\Vm(\thetav)$, such that $\ket{\psi} = \Vm(\thetav) \ket{0}$. The parameters $\thetav$ are optimized in a hybrid classical-quantum loop until the value of a pre-defined cost function $C(\thetav)$ is below a user-defined threshold. 

One can define two kinds of cost functions: global and local \cite{VQLS}. The global cost function is defined as,
\begin{align*}
C^{ug}(\thetav)&=Tr(|\phiv\ra\la\phiv|(\In-|\bv\ra\la\bv|))=\la \psiv|H^{g}|\psiv\ra,
\end{align*}
where, $|\phiv\ra=\Am|\psiv\ra $ and
$
H^{g}=\Am^*\Um(\In-|\Zev\ra\la\Zev|)\Um^*\Am,
$
is the effective global Hamiltonian. One can also define a normalized version of the cost function as,
\begin{align*}
C^{g}&=\frac{C^{gu}}{\la\phiv|\phiv\ra}=1-\frac{|\la\bv|\phiv\ra|^2}{\la\phiv|\phiv\ra}.
\end{align*}
$C^{g}$ can be calculated by computing $\la\bv|\phiv\ra$ and $\la\phiv|\phiv\ra$ using the LCU of matrix $\Am$ as,
\begin{align}
  \la\phiv|\phiv\ra &= \sum_{i,j=1}^{n_l}\beta_{ij}\alpha_i\alpha_j^*, \quad
  |\la\bv|\phiv\ra|^2  =\sum_{i,j=1}^{n_l}\gamma_{ij}\alpha_i\alpha_j^*, \nonumber\\
\beta_{ij} &=\la \Zev|\Vm^*\Am_j^*\Am_i\Vm|\Zev\ra, \label{eq:beta}\\ 
\gamma_{ij} &=\la \Zev|\Um^*\Am_i\Vm|\Zev\ra \la \Zev|\Vm^*\Am_j^*\Um|\Zev\ra.\label{eq:gamma}
\end{align}

Global cost functions defined above can exhibit barren plateaus which drastically affect the trainability of the cost function \cite{VQLS}. To alleviate this issue, local cost functions has been proposed as follows
\begin{equation*}
C^{ul}=\la\psi |H^l| \psi\ra,
\end{equation*}
where, the local Hamiltonian $H^l$ is defined as
\begin{equation*}
H^l=\Am^*\Um\left(\In-\frac{1}{n}\sum_{k=1}^{n}|0_k\ra\la 0_k|\otimes \In_{\tilde{k}}\right)\Um^*\Am, 
\end{equation*}
where, $|0_k\ra\la 0_k|\otimes \In_{\tilde{k}}=\In_2\otimes \cdots |0_k\ra\la 0_k|\cdots  \otimes \In_2$ with $\Id_{n}$ being an identity matrix of size $n \times n$.
One can also define local version of normalized cost,  
\begin{align*}
&C^{l}=\frac{C^{ul}}{\la\phiv|\phiv\ra}=1-\frac{1}{n}\frac{1}{\la\phiv|\phiv\ra}\sum_{k=1}^{n}\Sigma_k.
\end{align*}
where the term,
\begin{align*}
\Sigma_k &=\la\mathbf{0}|\Vm^*\Am^*\Um(|0_k\ra\la 0_k|\otimes \In_{\tilde{k}})\Um^*\Am\Vm|\mathbf{0}\ra, \\
&= \sum_{i,j=1}^{n_l} \frac{\alpha_i\alpha_j^*}{2} \left(\beta_{ij}+\delta_{ijk}\right),\nonumber
\end{align*}
with,
\begin{align}
\delta_{ijk}&=\la \Zev|\Vm^*\Am_j^*\Um(\Zm_k\otimes \In_{\tilde{k}})\Um^*\Am_i\Vm|\Zev\ra\label{eq:delta}.
\end{align}
Typically, the LCU decomposition (\ref{eq:LCU}) is done in terms of Pauli operators $P = \{I, \sigma_x, \sigma_y, \sigma_z\}$, so that $\Am_l=\sigma_1\otimes\sigma_2\otimes\cdots\otimes\sigma_n$ with $\sigma_i\in P, i=1,\cdots,n$. $\Am_l$ is a unitary by definition and can be implemented efficiently on a quantum device.
However, for structured sparse matrices that typically arise in solving PDEs, number of terms $n_l$ in the LCU decomposition can be much larger than the number of non-zero entries in the matrix. Since, $n_l^2 \log N$ terms are needed to calculate the local cost function, it is important to reduce the number of terms. Ideally, one would like $n_l = \mathcal{O}(\text{poly}(\log N))$.

\section{Decomposition Using Sigma Basis}
\label{sec: sigma}

In~\cite{liu2021variational}, the authors develop an alternate tensor product decomposition under a set of simple, albeit non-unitary, operators which we refer to as the sigma basis.
\begin{definition}
The sigma basis is a set
\begin{equation}\label{def:S}
S =\{\Id, \Sp, \Sm, \Sp\Sm, \Sm\Sp\},
\end{equation}
where, $\Sp=\ket{0} \bra{1}$, $\Sm=\ket{1}\bra{0}$, $\Sp\Sm = \ket{0}\bra{0}$ and $\Sm\Sp=\ket{1}\bra{1}$.
\end{definition}
Using this basis, they developed an efficient tensor product decomposition of the $N \times N$ coefficient matrix arising from a finite difference discretization of the Poisson equation into $(2\log N +1)$ terms. They compute the VQLS global cost function by designing observables to estimate $\beta_{ij}$ and $\gamma_{ij}$ terms and performing measurements in the Bell basis. Through this work, they demonstrated that the tensor product decomposition can be done even in terms of non-unitary operators as long as a quantum circuit can be constructed efficiently to evaluate the cost function.

We adapt the sigma basis in our work and develop a novel quantum circuit that can be extended to compute VQLS terms in both the global and local cost functions. Our technique can be understood through the lens of unitary completion and does not involve designing observables, see Section \ref{sec: discussion} for comparison with the approach discussed above. We begin by illustrating our approach for Heat equation since it provides another PDE example (in addition to Possion equation) where sigma basis leads to an efficient decomposition. However, note that our circuit construction is general and can be applied to any matrix given as a tensor product decomposition in terms of sigma basis.

\subsection{Model System: Heat Equation}
\label{subsec: heat}
Consider the 1D heat transfer problem over a domain $[0,\, l]$ with Neumann boundary conditions as follows:
\begin{align}
\frac{\partial u}{\partial t}&=\alpha \frac{\partial^2 u}{\partial x^2},\nonumber\\
  -k\frac{\partial u}{\partial x}\bigg|_{x=0} = q, &\,
  -k\frac{\partial u}{\partial x}\bigg|_{x=l} = 0, \nonumber\\
   u(x,0)&=u_0(x),
\end{align}
where, $\alpha$ is the thermal diffusivity, $k$ is the thermal conductivity of the material and $q$ is a constant heat flux. We discretize the PDE in space with second order accuracy, and the boundary condition with first order accuracy. Backward Euler scheme is used for time discretization. This leads to system of linear equations of the form
\begin{align}
\label{eq: diff}
\bigg(\Id_{n_x}-\frac{\alpha \Delta t}{(\Delta x)^2} \Am'\bigg)\uv_{t+1}= \uv_t+ \frac{q\,\Delta t}{k\Delta x}\ev_1,
\end{align}
where, $n_x$ is the number of spatial grid point, $\uv_t = \begin{bmatrix}
u_{1,t} & u_{2,t} & \dots & u_{n_x ,t}
\end{bmatrix}^T$, $u_{i,t} = u((i-1)\Delta x, t\Delta t)$, $\Delta x$ is the spatial grid size, $\Delta t$ is the temporal grid size,  $\ev_1=\begin{bmatrix} 1 & 0 & 0 \cdots 0 \end{bmatrix}^T\in \mathbb{R}^{n_x}$, and $\Am'$ is a $n_x \times n_x$ matrix of the form
\begin{align*}
\Am' &= \begin{bmatrix*}[r]
 -1 & 1 &  &  & 0 \\
       1 & -2 & 1 &  &  \\
       & \ddots & \ddots & \ddots &  \\
       &  & \ddots  & -2 & 1 \\
      0 &  &  & 1 & -1 \\
\end{bmatrix*}.
\end{align*}
We can express the difference equations of the form~\Cref{eq: diff} for $t=1,\cdot,n_t$ into a single linear system $\Am \uv = \bv$, where, $\uv = \begin{bmatrix}
\uv_1^T & \uv_2^T & \dots & \uv^T_{n_t}
\end{bmatrix}^T$, $\bv = \begin{bmatrix}
\uv_0^T & \frac{q\Delta t}{k \Delta x}\ev^T_1 & \dots & \frac{q\Delta t}{k \Delta x}\ev^T_1
\end{bmatrix}^T$ and 
\begin{align}\label{eq: full_coeff}
\Am &= \begin{bmatrix*}[r]
\Id_{n_x} & & & 0\\
-\Id_{n_x} & \Id_{n_x} & & \\
& \ddots & \ddots & \\
0 & & -\Id_{n_x} & \Id_{n_x}
\end{bmatrix*} - \frac{\alpha \Delta t}{\Delta x^2} \begin{bmatrix}
0 & & & \\
 & \Am' & & \\
 &&\ddots & \\
 && &\Am'
\end{bmatrix} = \Am_1 - \frac{\alpha \Delta t}{\Delta x^2}  \Am_2.
\end{align}

We can find a tensor decomposition of $\Am_1$ and $\Am_2$ under the sigma basis using a recursive decomposition strategy as follows. For simplicity assume $n_x= 2^s$ and $n_t = 2^t$. Then $\Am_1$ corresponds to $s+t$ qubits and can be written as
 \[\Am_1 \coloneqq \Am_1^{(s+t)} = \begin{bmatrix}
 \Am_1^{(s+t-1)} & 0 \\
 \Dm_1^{(s+t-1)} & \Am_1^{(s+t-1)}
 \end{bmatrix},
\]
where
\begin{align*}
\Dm_1^{(s+t-1)} &= \begin{bmatrix}
0 & \dots & -\Id_{n_x} \\
\vdots & \ddots & \vdots \\
0 & \dots & 0
\end{bmatrix} = \underbrace{\Sp \otimes \dots \otimes \Sp}_{t-1 \text{ times}} \otimes \Id_{n_x}.
\end{align*} 
Then, 
\begin{align*}
\Am_1^{(s+t)} &= \Id_2 \otimes \Am_1^{(s+t-1)} + \Sm \otimes \Dm_1^{(s+t-1)}, \\
\Am_1^{(s+1)} &= \begin{bmatrix*}[r]
\Id_{n_x} & 0\\
-\Id_{n_x} & \Id_{n_x}
\end{bmatrix*} = \Id_2 \otimes \Id_{n_x} - \Sm \otimes \Id_{n_x}. 
\end{align*}
With this recursive relation, we $\Am_1$ can be written using $\log n_t + 1$ terms. Next, $\Am_2$ can be written as
\begin{align*}
\Am_2 &= \Id_{n_t} \otimes \Am' - \Sp\Sm \otimes \dots \otimes \Sp\Sm \otimes \Am',
\end{align*}
with
\begin{align*}
\Am' &= \begin{bmatrix*}[r]
 -2 & 1 &  &   0 \\
       1 & \ddots &\ddots &    \\
       & \ddots & \ddots & 1   \\
      0 &    & 1 & -2 \\
\end{bmatrix*}  + \begin{bmatrix*}[r]
 1 &  &  &    \\
        &0 & &    \\
       &  & \ddots &    \\
       &   &  & 1 \\
\end{bmatrix*} \\
&= \Am'_1 + \Am'_2. 
\end{align*}
$\Am_2'$ is simply two terms $\Sp\Sm \otimes \dots \otimes \Sp\Sm + \Sm\Sp \otimes \dots \otimes \Sm\Sp$. We can get a recursive decomposition for $\Am_1'$ using similar procedure as for $\Am_1$ above. Specifically, it follows that
\begin{align*}
\Am_1' \coloneqq \Am_1'^{(s)} &= \Id_2 \otimes \Am_1'^{(s-1)} + \Sm \otimes \Dm'^{(s-1)} + \Sp \otimes (\Dm'^{(s-1)})^T, \\
\Dm'^{(s-1)} &= \underbrace{\Sp \otimes \dots \otimes \Sp}_{s-1 \text{ times}}, \\
\Am_1'^{(1)} &= -2\,\Id_2 + \Sm + \Sp.
\end{align*}
Thus, we can write $\Am'$ using $2\, \log n_x +3$ terms and hence, $\Am_2$ using $4 \,\log n_x + 6$ terms.

This non-trivial example demonstrates that tensor decomposition under the sigma basis can lead to logarithmic or poly-logarithmic number of terms for structured sparse matrices. We obtain a similar decomposition given a Robin boundary condition of the form $w_1 u(x,t) + w_2 \frac{\partial u}{\partial x} = q$ with the only difference being the non-zero entries of $\Am_2'$ taking the value $w_2 / (w_1 \Delta x + w_2)$.

\section{Cost Function Evaluation}
\label{sec: method}
In this section, we develop novel quantum circuits to evaluate the terms $\beta_{ij}$, $\gamma_{ij}$, and $\delta_{ijk}$ that appear in the global and local VQLS cost functions, for $\Am \in \mathbb{R}^n\otimes \mathbb{R}^n$ given in form of tensor product decomposition defined over sigma basis, i.e., 
\begin{equation}
\Am=\sum_{i=1}^{n_l}\alpha_i\Am_i,
\end{equation} 
where, $\alpha_i\in \mathbb{C}$,  $\Am_i=\sigma_1\otimes\cdots\otimes\sigma_n$ with $\sigma_i\in S$ for $i=1,\cdots,n$.

\subsection{Quantum Circuits for $\delta_{ijk}$ and $\beta_{ij}$}
First, let us focus on the terms $\delta_{ijk}$ as defined in~\Cref{eq:delta}. Consider a unitary operator $\Um_l$ associated with operator $\Am_l$ of the form
\begin{equation}
\label{eq: Um_l}
\Um_l = \begin{bmatrix}
\Am^c_l & \Am_l \\
\Am_l & \Am^c_l
\end{bmatrix},
\end{equation}
where, $\Am_l^c$ is the unitary complement of operator $\Am_l$. This essentially means that the subspace spanned by the columns (rows) or $\Am^c_l$ are orthogonal to the subspace spanned by the columns (rows) of $\Am_l$, see Section \ref{subsec:Um} for details. With this definition, it is straightforward to see that $\Um_l$ is in fact a unitary matrix. The application of $\Um_l$ on an ancilla system of the form $\ket{0}\ket{\psi}$ gives, 
\begin{equation}
\label{eq: Umact}
\Um_l \ket{0} \ket{\psi} = \ket{0} \Am^c_l \ket{\psi} + \ket{1} \Am_l\ket{\psi}.
\end{equation}
This is related to the concepts of unitary dilation, block encoding, projective measurements and unitary dynamics~\cite{nielsen2010quantum}, for further discussion see the Section~\ref{sec: discussion}. Starting with a $n+2$ qubit system (2 ancilla bits), we can compute $\delta_{ijk}$ through the following operations
\begin{align*}
\ket{0^{n+2}} \xrightarrow[]{H_{a_0}, \Vm(\theta)} &\frac{1}{\sqrt{2}} \big( \ket{00} \ket{\psi} +  \ket{10} \ket{\psi} \big)  \\
\xrightarrow[]{C_{\Um_i}, OC_{\Um_j}} &\frac{1}{\sqrt{2}} \big( \ket{0} {\Um}_j \ket{0}\ket{\psi} +  \ket{1} {\Um}_i \ket{0}\ket{\psi} \big)  \\
\qquad \quad =& \frac{1}{\sqrt{2}} \big( \ket{01} \Am_j \ket{\psi} + \ket{00} \Am_j^c \ket{\psi} + \ket{11} \Am_i \ket{\psi} +  \ket{10} \Am_i^c \ket{\psi} \big)  \\
\xrightarrow[]{\Um CC_{(\Zm_k\otimes \In_{\tilde{k}})} \Um^*} &\frac{1}{\sqrt{2}} \big( \ket{01} \Am_j \ket{\psi} +  \ket{00} \Am_j^c \ket{\psi} + \ket{11} \Um (\Zm_k\otimes \In_{\tilde{k}})\Um^*\Am_i \ket{\psi} +  \ket{10} \Am_i^c \ket{\psi} \big)  \\
\xrightarrow[]{H_{a_0}} &\frac{1}{2} \bigg( \ket{01} \big(\Am_j \ket{\psi} + \Um (\Zm_k\otimes \In_{\tilde{k}})\Um^*\Am_i \ket{\psi} \big) +   \ket{00} \big( \Am_j^c \ket{\psi} +  \Am_i^c \ket{\psi}  \big) + \\
  & \qquad \ket{11} \big( \Am_j \ket{\psi} - \Um (\Zm_k\otimes \In_{\tilde{k}})\Um^*\Am_i \ket{\psi} \big) +  \ket{10} \big(\Am_j^c \ket{\psi} -  \Am_i^c \ket{\psi}\big) \bigg),
\end{align*}
and, finally measuring the two ancilla bits 
\begin{align*}
P_{01} - P_{11} &= \frac{1}{4} \bigg( \Big\| \Am_j \ket{\psi} + \Um (\Zm_k\otimes \In_{\tilde{k}})\Um^*\Am_i \ket{\psi}  \Big\| ^2 -  \Big\|  \Am_j \ket{\psi} - \Um (\Zm_k\otimes \In_{\tilde{k}})\Um^*\Am_i \ket{\psi}  \Big\| ^2\bigg) \\
&= \bra{\psi} \Am_j^* \Um(\Zm_1\otimes \In_{\tilde{1}})\Um^*\Am_i \ket{\psi}.
\end{align*}
Here $C_*, OC_*, CC_*$ represent controlled, open-controlled, and control-control application of the unitaries. The associated quantum circuit is shown in \Cref{fig: local_cost_circuit}. The unitary operator $\Um_l$ can be implemented efficiently as shown in the blue boxes in \Cref{fig: local_cost_circuit}, details are given in the Section~\ref{subsec:Um}.  In order to evaluate $\beta_{ij}$, the circuit in \Cref{fig: local_cost_circuit} can be simplified by removing the blocks corresponding to $\Um (\Zm_k\otimes \In_{\tilde{k}})\Um^*$.

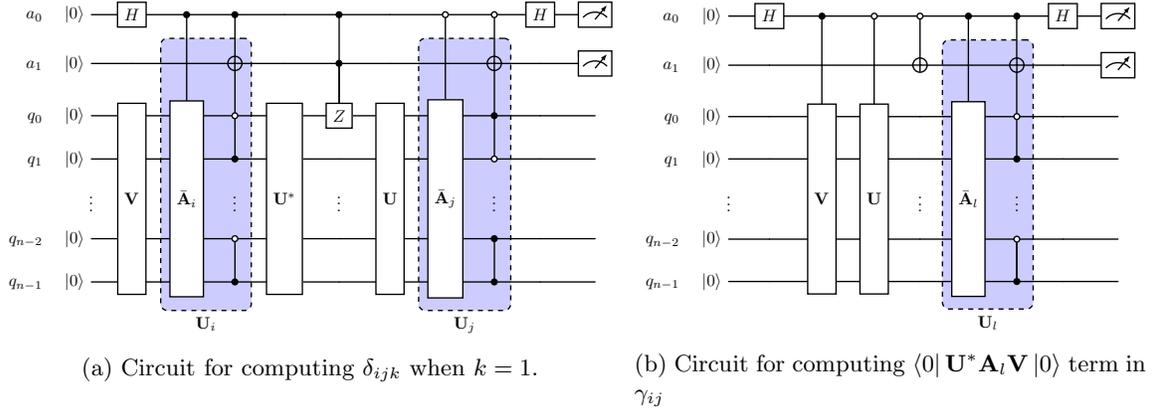
\begin{figure*}[h]
\centering
\begin{subfigure}[t]{0.55\textwidth}
\resizebox{1.\textwidth}{!}{
\begin{quantikz}[wire types = {q, q, q, q, n, q, q}, transparent]
\lstick{$a_0$ \quad \ket{0}}& \gate{H} & \ctrl{2} & \ctrl{1}  & & \ctrl{2} & &  \octrl{2} & \octrl{1} & \gate{H} & \meter{}  \\
\lstick{$a_1$ \quad \ket{0}}&           &    \gategroup[6,steps=2,style={dashed,rounded
corners,fill=blue!20, inner
xsep=2pt},background, label style={label
position=below,anchor=north,yshift=-0.2cm}]{{\sc
${\Um}_i$}}    & \targ{} & & \ctrl{1} & & \gategroup[6,steps=2,style={dashed,rounded
corners,fill=blue!20, inner
xsep=2pt},background, label style={label
position=below,anchor=north,yshift=-0.2cm}]{{\sc
${\Um}_j$}} & \targ{} & & \meter{}  \\
\lstick{$q_0$ \quad \ket{0}}& \gate[5]{\Vm}&  \gate[5]{\bar{\Am}_i}&  \octrl{-1}  & \gate[5]{\Um^*} & \gate{Z} & \gate[5]{\Um} & \gate[5]{\bar{\Am}_j} & \ctrl{-1}& & \\
\lstick{$q_1$ \quad \ket{0}} & & & \ctrl{-1}   &&& & &\octrl{-1}& &\\
\vdots &&& \vdots  && \vdots & && \vdots &  & \\
\lstick{$q_{n-2}$ \quad \ket{0}}&& & \octrl{0} && & & &\ctrl{0} & &\\
\lstick{$q_{n-1}$ \quad \ket{0}}&& & \ctrl{-1} && & & &\ctrl{-1} & &
\end{quantikz}
}
\caption{Circuit for computing $\delta_{ijk}$ when $k=1$. }
\label{fig: local_cost_circuit}
\end{subfigure}%
~
\begin{subfigure}[t]{0.45\textwidth}
\resizebox{1.\textwidth}{!}{
\begin{quantikz}[wire types = {q, q, q, q, n, q, q}, transparent]
\lstick{$a_0$ \quad \ket{0}}& \gate{H} & \ctrl{2} & \octrl{2} &  \octrl{1} &\ctrl{2} & \ctrl{1} &  \gate{H} & \meter{}  \\
\lstick{$a_1$ \quad \ket{0}}&   &&       &   \targ{}      & \gategroup[6,steps=2,style={dashed,rounded
corners,fill=blue!20, inner
xsep=2pt},background,  label style={label
position=below,anchor=north,yshift=-0.2cm}]{{\sc ${\Um}_l$}}   & \targ{} && \meter{}  \\
\lstick{$q_0$ \quad \ket{0}}& & \gate[5]{\Vm} & \gate[5]{\Um}& &  \gate[5]{\bar{\Am}_l} &  \octrl{-1}  &  & \\
\lstick{$q_1$ \quad \ket{0}} && & && &\ctrl{-1}  & & \\
\vdots & &&&\vdots && \vdots&   &\\
\lstick{$q_{n-2}$ \quad \ket{0}}& & && & &\octrl{1}  && \\
\lstick{$q_{n-1}$ \quad \ket{0}}& &&& &  &\ctrl{-1} & & 
\end{quantikz}
}
\caption{Circuit for computing $\bra{0}\Um^* \Am_l \Vm\ket{0}$ term in $\gamma_{ij}$}
\label{fig: global_cost_circuit}
\end{subfigure}
\caption{Hadamard test for computing the terms in the global and local cost functions. The circuit in (a) can be simplified by removing the blocks corresponding to $\Um (\Zm_k\otimes \In_{\tilde{k}})\Um^*$ to compute $\beta_{ij}$.}
\end{figure*}

\subsection{Quantum Circuit for $\gamma_{ij}$ }
 
The terms  $\gamma_{ij}$ defined in (\Cref{eq:gamma}) can be determined by computing $n_l$ terms of the form $\la \Zev|\Um^*\Am_l\Vm|\Zev\ra$ using the circuit shown in the~\Cref{fig: global_cost_circuit}.  Starting with a $n+2$ qubit system, the circuit performs the following sequence of operations:
\begin{align*}
\ket{0^{n+2}} \xrightarrow[]{H_{a_0}, OC_\Um, C_{\Vm(\theta)}}& \frac{1}{\sqrt{2}} \big( \ket{00} \ket{\bv} +  \ket{10} \ket{\psi} \big)  \\
\xrightarrow[]{\Um_l}& \frac{1}{\sqrt{2}} \big( \ket{00} \ket{\bv} +  \ket{10} \Um_l \ket{\psi} \big)  \\
=&\frac{1}{\sqrt{2}} \big( \ket{00} \ket{\bv} + 
  \ket{11} \Am_l \ket{\psi} +  \ket{10} \Am_l^c \ket{\psi} \big)  \\
 \xrightarrow[]{OC_X} &\frac{1}{\sqrt{2}} \big( \ket{01} \ket{\bv} + 
  \ket{11} \Am_l \ket{\psi} +  \ket{10} \Am_l^c \ket{\psi} \big)  \\
\xrightarrow[]{H_{a_0}} &\frac{1}{2} \bigg( \ket{01} \big(\ket{\bv} + \Am_l \ket{\psi} \big) +   \ket{00}  \Am_l^c \ket{\psi} +  \ket{11} \big( \ket{\bv} -\Am_l \ket{\psi} \big) -  \ket{10}  \Am_l^c \ket{\psi} \bigg).
\end{align*}
After measuring the two ancilla bits and using the relation
\begin{align*}
P_{01} - P_{11} &= \frac{1}{4} \bigg( \Big\| \ket{\bv} + \Am_l \ket{\psi}  \Big\| ^2 -  \Big\|  \ket{\bv} - \Am_l \ket{\psi}  \Big\| ^2\bigg) \\
&= \bra{\bv} \Am_l \ket{\psi} = \bra{0}\Um^* \Am_l \Vm\ket{0},
\end{align*}
leads to the desired result.

One can also construct a Hadamard Overlap test circuit in a similar fashion to avoid the controlled application of the unitaries $\Um, \Vm(\theta)$ to estimate $\gamma_{ij}$, see \cite{VQLS} for the details.  

\subsection{Construction of the quantum circuit for unitary operator $\Um_l$}
\label{subsec:Um}

To develop the quantum circuit for $\Um_l$, we need a few definitions and properties of the sigma basis $S$. The unitary completion and complement of a linear operator that preserves inner product in a subspace are defined in \Cref{def: completion}. Using this, we can prove some important properties of sigma basis in \Cref{thm: prop_sig} and by extension for $\Am_l$ in \Cref{cor: prop_A}. 
\begin{definition}
\label{def: completion}
 If $U: W \rightarrow V$ is a linear operator which preserves inner products, i.e., for any $\ket{w_1}$, $\ket{w_2}$ in $W \subset V$, $\bra{w_1}U^*U\ket{w_2} = \bra{w_1}w_2\ra$, then an unitary operator $\bar{U}: V\rightarrow V$ is its unitary completion if $\bar{U}$ spans the whole space $V$ and $\bar{U} \ket{w} = U \ket{w}\, \forall \ket{w} \in W$. Also, $U^c \coloneqq U'-U$ is the unitary complement of $U$. Such a unitary operator $\bar{U}$ always exists (see Ex 2.67,~\cite{nielsen2010quantum}).
\end{definition}

\begin{theorem}
\label{thm: prop_sig}
Properties of sigma basis:
\begin{enumerate}
\item The unique unitary completion for $\sigma_+$ and $\sigma_-$ is $\sigma_x = \begin{bmatrix}
0 & 1 \\
1 & 0
\end{bmatrix}$ (Pauli X) and for $\mathbf{I}, \sigma_+\sigma_-, \sigma_-\sigma_+$ is the Identity matrix $\Id$. 
\item The unitary complement of $\Sp$ is $\Sm$, $\Sm$ is $\Sp$, $\Sp\Sm$ is $\Sm\Sp$, $\Sm\Sp$ is $\Sp\Sm$ and for $\Id$ is $\mathbf{0}$.
\item $\sigma_k \sigma_k^T$ and $\sigma_k^T \sigma_k$ for $\sigma_k \in  S$  are idempotent matrices. 
\item $\sigma_k \sigma_k^T \sigma_k = \sigma_k$ for $\sigma_k \in  S$.  
\end{enumerate}
\begin{proof}
\hspace{1pt}
\begin{enumerate}
\item It is trivial to see $\sigma_x, \Id$ span $\mathbb{R}^2$. Consider subspace $W_+ = \{\alpha \ket{1}: \alpha \in \mathbb{R}\} \subset \mathbb{R}^2 $ such that for $w \in W_+$, $\bra{w} \Sp^T \Sp \ket{w} = \la w \ket{w}$. Then $\sigma_x \ket{w} = \Sp \ket{w} \quad \forall w \in W_+$ and hence by definition, $\sigma_x$ is an unitary completion of $\Sp$. By considering the same subspace $W_+$, we can show that $\Id$ is an unitary completion of $\Sm\Sp$. Similarly, by considering subspace $W_- = \{\alpha \ket{0}: \alpha \in \mathbb{R} \}$, one can show that $\sigma_x$ is also an unitary completion of $\Sm$, and $\Id$ is an unitary completion of $\Sp\Sm$. 
\item Follows by definition. 
\item It is straightforward to see that $(\sigma_k \sigma_k^T)^2 = \sigma_k \sigma_k^T$ as $\sigma_k \sigma_k^T \in \{\Id,\sigma_{+}\sigma_{-},\sigma_{-}\sigma_{+}\}$. Similarly for $\sigma_k^T \sigma_k$.
\item Trivially true for $\sigma_k \in \{\Sp\Sm, \Sm\Sp, \Id \}$. For $\sigma_k \in \{\Sp, \Sm\} $,
\begin{align*}
\Sp \Sp^T \Sp &= (\Sp\Sm)\Sp = \Sp, \\
\Sm \Sm^T \Sm &= (\Sm\Sp)\Sm = \Sm.
\end{align*}
\end{enumerate}
\end{proof}
\end{theorem}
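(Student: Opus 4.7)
The plan is to prove each of the four claims by case analysis over the five elements of the sigma basis, exploiting their simple outer-product form $\Sp = \ket{0}\bra{1}$, $\Sm = \ket{1}\bra{0}$, $\Sp\Sm = \ket{0}\bra{0}$, $\Sm\Sp = \ket{1}\bra{1}$, and $\Id = \ket{0}\bra{0} + \ket{1}\bra{1}$. Each claim reduces to a one-line outer-product computation, so the bulk of the work is organizational rather than analytical.

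For claim (1), the first step is, for each $\sigma \in S$, to identify the maximal subspace $W_\sigma \subset \mathbb{R}^2$ on which $\sigma$ preserves inner products (the domain of isometry). A direct check shows that this is $W_+ := \mathrm{span}\{\ket{1}\}$ for $\Sp$ and $\Sm\Sp$, $W_- := \mathrm{span}\{\ket{0}\}$ for $\Sm$ and $\Sp\Sm$, and all of $\mathbb{R}^2$ for $\Id$. I would then verify that the candidate completion ($\sigma_x$ or $\Id$) agrees with $\sigma$ on $W_\sigma$ and is unitary on $\mathbb{R}^2$; both checks are immediate from the outer-product representations. The subtlest point is \emph{uniqueness}: for the rank-one operators $\Sp, \Sm, \Sp\Sm, \Sm\Sp$, once the image of the distinguished basis vector of $W_\sigma$ is pinned down, the remaining column of the unitary is constrained to be a unit vector orthogonal to it, leaving only a choice of sign. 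Over $\mathbb{R}$ the paper's convention picks the positive sign, so uniqueness holds; this is the one place where the argument quietly relies on the real-valued setting (over $\mathbb{C}$ there would be a residual global phase on the orthogonal complement).

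For claim (2), I would simply apply the definition $\sigma^c = \bar{\sigma} - \sigma$ to the completions established in (1). The five identities follow from the two decompositions $\sigma_x = \Sp + \Sm$ and $\Id = \Sp\Sm + \Sm\Sp$, so the claim reduces to inspection.

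For claims (3) and (4), I would compute $\sigma_k \sigma_k^T$ directly for each element of $S$ and observe that it always lies in $\{\Id, \Sp\Sm, \Sm\Sp\}$, the set of rank-$\leq 1$ orthogonal projectors, which are manifestly idempotent; the same calculation handles $\sigma_k^T \sigma_k$. For (4), the claim is trivial when $\sigma_k \in \{\Id, \Sp\Sm, \Sm\Sp\}$ because $\sigma_k = \sigma_k^T$ and idempotence from (3) gives $\sigma_k^3 = \sigma_k$; for $\sigma_k \in \{\Sp, \Sm\}$ the identities $\Sp \Sp^T \Sp = (\Sp\Sm)\Sp = \Sp$ and $\Sm \Sm^T \Sm = (\Sm\Sp)\Sm = \Sm$ follow from $\bra{0}0\ra = \bra{1}1\ra = 1$ together with associativity of outer products. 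No step here is expected to be difficult; the main care is in not conflating the Hermitian conjugate with the transpose, which is harmless given the real entries throughout.
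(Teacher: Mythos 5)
Your proof is correct and takes essentially the same route as the paper: the same isometry subspaces $\mathrm{span}\{\ket{1}\}$ and $\mathrm{span}\{\ket{0}\}$ for part (1), the definition $\sigma^c=\bar{\sigma}-\sigma$ for part (2), and the same one-line outer-product computations for parts (3) and (4). Your aside on uniqueness (the residual sign, or phase over $\mathbb{C}$, on the orthogonal complement) is in fact slightly more careful than the paper's proof, which only verifies that $\sigma_x$ and $\Id$ \emph{are} completions and never argues uniqueness at all.
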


\begin{corollary}
\label{cor: prop_A}
If $\Am_l = \otimes_{k} \sigma_k$ where $\sigma_k \in S$,
\begin{enumerate}
\item  $\Am_l \Am_l^T$ and $\Am_l^T \Am_i$ are idempotent matrices.
\item $\Am_l \Am_l^T \Am_l = \Am_l$.
\end{enumerate}
\begin{proof} 
Follows from the properties of the sigma basis established in the \Cref{thm: prop_sig}. 
\end{proof}
\end{corollary}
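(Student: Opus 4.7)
The plan is to lift each of the two properties from the single-qubit setting (already established in \Cref{thm: prop_sig}) to the tensor-product operator $\Am_l = \bigotimes_k \sigma_k$ by invoking the mixed-product property of the Kronecker product, namely $(\bigotimes_k M_k)(\bigotimes_k N_k) = \bigotimes_k (M_k N_k)$. Since transposition distributes over tensor products, I also have $\Am_l^T = \bigotimes_k \sigma_k^T$, so the computations reduce factor-by-factor.

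For part~1, I would first expand
\[
\Am_l \Am_l^T = \Bigl(\bigotimes_k \sigma_k\Bigr)\Bigl(\bigotimes_k \sigma_k^T\Bigr) = \bigotimes_k \bigl(\sigma_k \sigma_k^T\bigr).
\]
Squaring and applying the mixed-product rule again gives $(\Am_l \Am_l^T)^2 = \bigotimes_k (\sigma_k \sigma_k^T)^2$. By property~3 of \Cref{thm: prop_sig}, each factor $(\sigma_k \sigma_k^T)^2$ equals $\sigma_k \sigma_k^T$, so the overall tensor product collapses back to $\Am_l \Am_l^T$, proving idempotence. The argument for $\Am_l^T \Am_l$ is identical in form, using $\bigotimes_k (\sigma_k^T \sigma_k)$ instead.

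For part~2, the same factorization yields
\[
\Am_l \Am_l^T \Am_l = \bigotimes_k \bigl(\sigma_k \sigma_k^T \sigma_k\bigr),
\]
and property~4 of \Cref{thm: prop_sig} tells me that each factor on the right-hand side equals $\sigma_k$. Hence the tensor product reassembles into $\bigotimes_k \sigma_k = \Am_l$, as claimed.

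There is really no substantive obstacle here: the only thing to be careful about is the clean use of the mixed-product identity for the Kronecker product, together with the observation that this identity also preserves the algebraic relations that were verified qubit-wise in \Cref{thm: prop_sig}. I would also flag what appears to be a typo in the statement ($\Am_l^T \Am_i$ should read $\Am_l^T \Am_l$), since the proof strategy only makes sense for the diagonal case $i=l$.
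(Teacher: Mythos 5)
Your proof is correct and follows exactly the route the paper intends: the paper's proof is just the one-line remark that the corollary follows from \Cref{thm: prop_sig}, and your write-up supplies the implicit details (mixed-product property of the Kronecker product plus properties 3 and 4 applied factor-wise). Your reading of $\Am_l^T \Am_i$ as a typo for $\Am_l^T \Am_l$ is also the right interpretation.
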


If $\Am_l = \otimes_{k} \sigma_k,\sigma_k\in S$, then define $\bar{\Am}_l \coloneqq \otimes_k \bar{\sigma}_k$ where $\bar{\sigma}_k \in \{ \sigma_x, \Id \}$ and $\Am_l^c \coloneqq \bar{\Am}_l - \Am$.  It is trivial to see that $\bar{\Am}_l $ is a unitary matrix. \Cref{thm: Acomp} shows that $\bar{\Am}_l$ is nothing but the unitary completion of $\Am_l$, and $\Am_l^c$ is its unitary complement by using the \Cref{lem: AAc_orth}. 

\begin{lemma}
\label{lem: AAc_orth}
$\Am^T_l \Am^c_l = (\Am^c_l)^T \Am_l =  \Am_l (\Am^c_l)^T = \Am^c_l \Am_l^T = 0$.
\end{lemma}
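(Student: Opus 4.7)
The plan is to reduce each of the four stated identities to a pair of single-qubit facts about the sigma basis, which then follow from the unitary-completion argument already established in \Cref{thm: prop_sig}. Writing $\Am_l^c = \bar{\Am}_l - \Am_l$, the identity $\Am_l^T \Am_l^c = 0$ becomes $\Am_l^T \bar{\Am}_l = \Am_l^T \Am_l$, and similarly the other three identities become $\bar{\Am}_l^T \Am_l = \Am_l^T \Am_l$, $\Am_l \bar{\Am}_l^T = \Am_l \Am_l^T$, and $\bar{\Am}_l \Am_l^T = \Am_l \Am_l^T$. Because $\Am_l = \otimes_k \sigma_k$ and $\bar{\Am}_l = \otimes_k \bar{\sigma}_k$, each of these equations factors over the tensor product as $\otimes_k (\sigma_k^T \bar{\sigma}_k) = \otimes_k (\sigma_k^T \sigma_k)$ (and its obvious analogues), so it suffices to verify the single-qubit identities $\sigma_k^T \bar{\sigma}_k = \sigma_k^T \sigma_k$, $\bar{\sigma}_k^T \sigma_k = \sigma_k^T \sigma_k$, $\sigma_k \bar{\sigma}_k^T = \sigma_k \sigma_k^T$, and $\bar{\sigma}_k \sigma_k^T = \sigma_k \sigma_k^T$ for every $\sigma_k \in S$.

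These single-qubit identities can be established either by enumeration over the five elements of $S$ (since by \Cref{thm: prop_sig}(1) the completion $\bar{\sigma}_k$ is either $\sigma_x$ or $\Id$) or by a short geometric argument. For the latter, observe that in the proof of \Cref{thm: prop_sig}(1) the completion $\bar{\sigma}_k$ is designed to agree with $\sigma_k$ precisely on the subspace $W_k := \mathrm{range}(\sigma_k^T)$ (e.g.\ $W_+ = \mathrm{span}\{\ket{1}\}$ for $\sigma_k = \Sp$). Since every column of $\sigma_k^T$ lies in $W_k$, this immediately gives $\bar{\sigma}_k \sigma_k^T = \sigma_k \sigma_k^T$, and taking transpose yields $\sigma_k \bar{\sigma}_k^T = \sigma_k \sigma_k^T$. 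For the remaining two identities I would decompose an arbitrary vector along $W_k \oplus W_k^\perp$: on $W_k$ the two sides agree by the step just completed, while on $W_k^\perp$ one uses that $\sigma_k$ vanishes (immediate from the explicit forms in $S$) and that unitarity of $\bar{\sigma}_k$ forces $\bar{\sigma}_k(W_k^\perp) \subset \mathrm{range}(\sigma_k)^\perp = \ker(\sigma_k^T)$, so that both $\sigma_k^T \sigma_k$ and $\sigma_k^T \bar{\sigma}_k$ kill this component.

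The main pitfall to flag is the temptation to write $\Am_l^c$ as $\otimes_k \sigma_k^c$, which is \emph{false}: expanding $\otimes_k \bar{\sigma}_k - \otimes_k \sigma_k$ produces many cross terms and is not itself a pure tensor product of complements. Keeping the complement in the unexpanded form $\bar{\Am}_l - \Am_l$ until after left- or right-multiplication by $\Am_l$ or $\Am_l^T$ is what allows the tensor-product factorization to collapse the problem to a single-qubit check; otherwise one is forced to chase a combinatorial sum of cross terms, and the argument becomes much more painful than it needs to be. Beyond this observation, the proof is purely mechanical.
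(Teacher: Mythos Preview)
Your proof is correct but takes a slightly different route from the paper. The paper expands $\bar{\Am}_l = \otimes_k(\sigma_k + \sigma_k^c)$ into the sum $\Am_l + \sum_k \tilde{\Am}_k$ of all cross terms (each $\tilde{\Am}_k$ containing at least one complement factor $\sigma_p^c$), and then kills $\Am_l^T\tilde{\Am}_k$ using the single-qubit fact $\sigma_p^T\sigma_p^c = 0$ in that slot. You instead keep $\bar{\Am}_l$ unexpanded, reduce the identity to $\Am_l^T\bar{\Am}_l = \Am_l^T\Am_l$, and factor this as $\otimes_k(\sigma_k^T\bar{\sigma}_k) = \otimes_k(\sigma_k^T\sigma_k)$, which you verify factor by factor. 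The two single-qubit facts are of course equivalent (subtracting one from the other gives $\sigma_k^T\sigma_k^c = 0$), so the approaches differ only in bookkeeping: yours avoids the combinatorial sum entirely, at the cost of needing the identity to hold in \emph{every} tensor slot rather than just one. Amusingly, the ``painful'' cross-term expansion you warn against is exactly the route the paper takes---though the paper does not fall into the actual trap you flag, since it never claims $\Am_l^c$ is a pure tensor product of complements.
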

\begin{proof}
First, note that
\begin{align*}
\bar{\Am}_l &= \otimes_k \bar{\sigma}_k = \otimes_k (\sigma_k + \sigma_k^c)=\Am_l+\sum_k \tilde{\Am}_k, 
\end{align*}
where, $\tilde{\Am}_k=\tilde{\sigma}_{k1} \otimes \dots \otimes \tilde{\sigma}_{kp} \otimes \dots \tilde{\sigma}_{kn}$, with at least one $\tilde{\sigma}_{kp}=\sigma^c_p$ corresponding to $\sigma_p$ term in $\Am_l$.
Thus, 
\begin{align*}
&\Am_l^T \Am_l^c = \sum_k (\sigma_1^T \otimes \dots \otimes \sigma_n^T) \tilde{\Am}_k \\
&= \sum_k (\sigma_1^T \tilde{\sigma}_{k1} \otimes \dots \otimes \sigma_j^T \tilde{\sigma}_{kj} \otimes \dots \otimes \sigma_n^T \tilde{\sigma}_{kn})=0,
\end{align*}
because, $\sigma_p^T \tilde{\sigma}_{kp} = \sigma_p^T \sigma^c_{p}=0$ for at least one $p\in\{1,\cdots,n\}, \forall k$.
The other equalities can be shown in a similar fashion. 
\end{proof}

\begin{theorem}
\label{thm: Acomp}
If $\Am_l = \otimes_{k} \sigma_k$ where $\sigma_k \in S$, then $\bar{\Am}_l \coloneqq\otimes_k \bar{\sigma}_k$ where $\bar{\sigma}_k = \sigma_x$ for $\bar{\sigma}_k \in \{ \Sp, \Sm\}$ and $\bar{\sigma}_k = \Id$ for $\bar{\sigma}_k \in \{ \Id, \Sp\Sm, \Sm\Sp\}$ is the unitary completion of $\Am_l$.
\end{theorem}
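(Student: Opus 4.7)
The plan is to verify the three requirements of \Cref{def: completion} for the pair $(\Am_l, \bar{\Am}_l)$: that $\bar{\Am}_l$ is unitary on $V = (\mathbb{C}^2)^{\otimes n}$, that $\Am_l$ preserves inner products on some subspace $W \subset V$, and that $\bar{\Am}_l$ coincides with $\Am_l$ on $W$. Unitarity is immediate: by construction every factor $\bar{\sigma}_k$ lies in $\{\sigma_x, \Id\}$, both unitary on $\mathbb{C}^2$, so $\bar{\Am}_l = \otimes_k \bar{\sigma}_k$ is unitary on $V$ and in particular has range equal to $V$.

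For the other two conditions, the strategy is to lift the single-qubit content of \Cref{thm: prop_sig}(1) to the tensor product. For each $\sigma_k$, that theorem supplies a subspace $W_k \subseteq \mathbb{C}^2$ on which $\sigma_k$ preserves inner products and on which $\sigma_k$ and $\bar{\sigma}_k$ agree; explicitly $W_k = \mathrm{span}\{\ket{1}\}$ when $\sigma_k \in \{\Sp, \Sm\Sp\}$, $W_k = \mathrm{span}\{\ket{0}\}$ when $\sigma_k \in \{\Sm, \Sp\Sm\}$, and $W_k = \mathbb{C}^2$ when $\sigma_k = \Id$. Set $W \coloneqq W_1 \otimes \cdots \otimes W_n$. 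For product vectors $\ket{w} = \otimes_k \ket{w_k}$ and $\ket{w'} = \otimes_k \ket{w'_k}$ with $\ket{w_k}, \ket{w'_k} \in W_k$, the tensor structure gives
\[
\bra{w'} \Am_l^T \Am_l \ket{w} = \prod_k \bra{w'_k} \sigma_k^T \sigma_k \ket{w_k} = \prod_k \la w'_k | w_k \ra = \la w' | w \ra,
\]
and
\[
\bar{\Am}_l \ket{w} = \otimes_k \bar{\sigma}_k \ket{w_k} = \otimes_k \sigma_k \ket{w_k} = \Am_l \ket{w}.
\]
Both identities extend by linearity to all of $W$, which completes the verification.

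The main hurdle is simply identifying the correct $W$; once that is done the rest is mechanical factoring across tensor factors. A more algebraic alternative sidesteps the explicit product description by taking $W = \mathrm{range}(\Am_l^T)$: transposing \Cref{cor: prop_A}(2) gives $\Am_l^T \Am_l \Am_l^T = \Am_l^T$, so $\Am_l^T \Am_l$ acts as the identity on $W$ and $\Am_l|_W$ preserves inner products; and \Cref{lem: AAc_orth} ($\Am_l^c \Am_l^T = 0$) then implies that $\bar{\Am}_l = \Am_l + \Am_l^c$ agrees with $\Am_l$ throughout $W$. Either route exhibits the theorem as a direct tensor-product lift of the single-qubit statement in \Cref{thm: prop_sig}(1).
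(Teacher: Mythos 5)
Your proof is correct and takes essentially the same approach as the paper: both choose $W$ to be the tensor product of the single-qubit subspaces $W_k$ from \Cref{thm: prop_sig} and lift inner-product preservation to $\Am_l$ via the Kronecker structure. The only cosmetic difference is the final step --- you verify $\bar{\Am}_l\ket{w}=\Am_l\ket{w}$ factor by factor (and, in your alternative, via $W=\mathrm{range}(\Am_l^T)$ with \Cref{cor: prop_A} and \Cref{lem: AAc_orth}), whereas the paper shows $\|\Am_l^c\ket{w}\|_2^2=0$ using $\bar{\Am}_l^T\bar{\Am}_l=\Id$ together with the orthogonality of \Cref{lem: AAc_orth}.
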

\begin{proof}
Consider a non-trivial subspace $W = \{ w: w = \otimes_k w_k \,\forall w_k \in W_k \} \subset \mathbb{R}^N$ where $W_k$ is the subspace associated with $\sigma_k$. See proof of  \Cref{thm: prop_sig}  for the construction of the subspaces $W_k$. Using Kronecker product rules, for $\forall \, w \in  W$, $ \bra{w} \Am_l^T \Am_l \ket{w} = \langle w \ket{w} $. We need to show that $\bar{\Am}_l \ket{w} = \Am \ket{w},$ $\forall w \in W$ and spans $\mathbb{R}^N$. Alternately, one can show that $\Am^c_l \ket{w} =0, \forall w \in W$, i.e., 
\begin{align*}
\|\Am_l^c \ket{w}\|_2^2 &= \bra{w} (\Am_l^c)^T \Am_l^c \ket{w} \\
&= \bra{w} \bar{\Am}_l^T \bar{\Am}_l \ket{w} - \bra{w} \Am_l^T \Am_l \ket{w} \\
 &= \bra{w} \Id \ket{w} - \langle w \ket{w} =0. 
\end{align*} 
\end{proof}

We are now ready to construct a quantum circuit for $\Um_l$. \Cref{thm: circuit_U} proves that $\Um_l$ can be constructed with at most \textit{n} single qubit gates and one $C^nX$ gate (\textit{n}-controlled Toffoli gate). It also provides a construction of the circuit for any $\Um_l$ of the form \Cref{eq: Um_l}. 

\begin{theorem}
\label{thm: circuit_U}
${\Um}_l$  as defined in (\ref{eq: Um_l}) can be implemented using at most $n$ single qubit gates and a single $C^n X$ gate
\end{theorem}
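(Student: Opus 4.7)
The plan is to exhibit an explicit two-stage circuit for $\Um_l$ and then count gates. The first stage applies $\bar{\Am}_l = \otimes_k \bar{\sigma}_k$ to the $n$ target qubits; since each $\bar{\sigma}_k \in \{\Id,\sigma_x\}$, this contributes at most $n$ single-qubit $X$ gates (one for each index $k$ with $\sigma_k \in \{\Sp,\Sm\}$, nothing otherwise). The second stage is a single multi-controlled $X$ acting on the ancilla, with controls drawn from the $n$ target qubits. Once I justify that these two stages together realise $\Um_l$, the claimed gate count is immediate.

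To set up correctness, I will first rewrite the block form (\ref{eq: Um_l}) as $\Um_l\ket{0}\ket{\psi} = \ket{0}\Am_l^c\ket{\psi} + \ket{1}\Am_l\ket{\psi}$, and then combine $\Am_l^c = \bar{\Am}_l - \Am_l$ with \Cref{thm: Acomp} to obtain the simpler characterisation
\begin{equation*}
\Um_l\ket{0}\ket{\psi} \;=\;
\begin{cases}
\ket{1}\,\bar{\Am}_l\ket{\psi}, & \ket{\psi}\in W,\\
\ket{0}\,\bar{\Am}_l\ket{\psi}, & \ket{\psi}\in W^{\perp},
\end{cases}
\end{equation*}
where $W = \otimes_k W_k$ is the product subspace on which $\Am_l$ and $\bar{\Am}_l$ agree (as constructed in the proof of \Cref{thm: Acomp}). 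In words, $\Um_l$ applies $\bar{\Am}_l$ unconditionally to the targets and flips the ancilla precisely when the original target state lies in $W$.

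The heart of the proof is to transport the ``flip iff originally in $W$'' condition into a condition on the target register \emph{after} $\bar{\Am}_l$, so that the two stages can be executed in the order first-$\bar{\Am}_l$-then-multi-controlled-$X$ as shown in \Cref{fig: local_cost_circuit}. Because $\bar{\Am}_l$ is a tensor product of $\Id$'s and $\sigma_x$'s, it maps each one-dimensional factor $W_k$ to another computational subspace, and I will tabulate this correspondence over all five elements of $S$: after $\bar{\Am}_l$, the $k$-th target qubit lies in $\ket{0}$ when $\sigma_k\in\{\Sp,\Sp\Sm\}$, in $\ket{1}$ when $\sigma_k\in\{\Sm,\Sm\Sp\}$, and is unconstrained when $\sigma_k=\Id$. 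The detector of $\bar{\Am}_l(W)$ is therefore a single multi-controlled $X$ with open control where $\sigma_k\in\{\Sp,\Sp\Sm\}$, closed control where $\sigma_k\in\{\Sm,\Sm\Sp\}$, and no control where $\sigma_k=\Id$, with target on the ancilla.

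Finally I will verify the $\ket{1}\ket{\psi}$-branch, confirming that the proposed circuit realises the full unitary and not merely its restriction to the $\ket{0}$-ancilla sector: a direct computation from $\Um_l\ket{1}\ket{\psi} = \ket{0}\Am_l\ket{\psi} + \ket{1}\Am_l^c\ket{\psi}$ shows that the same multi-controlled $X$, acting after $\bar{\Am}_l$, sends $\ket{1}\ket{\psi}$ to $\ket{0}\bar{\Am}_l\ket{\psi}$ for $\ket{\psi}\in W$ and to $\ket{1}\bar{\Am}_l\ket{\psi}$ for $\ket{\psi}\in W^{\perp}$, in agreement with the block form. The main obstacle is the five-way case analysis in the previous paragraph: matching each $\sigma_k\in S$ to the correct type of control is where a mistake would most easily slip in. A minor subtlety is the convention by which a multi-controlled $X$ with mixed open and closed controls is counted as a single $C^nX$; this can be handled either by allowing mixed controls in the definition, or by absorbing the open-to-closed conversion $X$'s into the first-stage single-qubit layer, which keeps the total at most $n$.
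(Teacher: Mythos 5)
Your construction is correct and is essentially the paper's own: the same two-stage circuit (the layer $\Id\otimes\bar{\Am}_l$, costing at most $n$ single-qubit gates, followed by one mixed open/closed-control $C^nX$ targeting the ancilla) with the identical control-assignment rule; the paper verifies it via the explicit factorization $\Um_l=\Um_{l,1}\Um_{l,2}$ with $\Um_{l,1}$ a permutation matrix built from the binary diagonal $\Am_l\Am_l^T$, whereas you verify the same circuit by tracking the action on $W$ and $W^{\perp}$, which amounts to the same criterion since the diagonal of $\sigma_k\sigma_k^T$ encodes exactly your five-way table. One minor aside: your fallback of converting open controls to closed ones cannot be fully absorbed into the first layer (an $X$ is needed on both sides of the control), but this is moot because the paper, like you, counts a mixed-control gate as a single $C^nX$.
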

\begin{proof}
${\Um}_l$ can be written as a product of two unitary matrices ${\Um}_{l,1}, {\Um}_{l,2}$, i.e. ${\Um}_l={\Um}_{l,1}, {\Um}_{l,2}$,   such that 
\begin{align*}
{\Um}_{l,2} &= \Id \otimes \bar{\Am}_l = \begin{bmatrix}
\bar{\Am}_l &  \\
 & \bar{\Am}_l \\
 \end{bmatrix}  \\
 {\Um}_{l,1} &= {\Um}_l {\Um}_{l,2}^T =  \begin{bmatrix}
\Am^c_l & \Am_l \\
\Am_l & \Am^c_l
\end{bmatrix} \begin{bmatrix}
\bar{\Am}^T_l &  \\
 & \bar{\Am}^T_l \\
 \end{bmatrix}  \\
  &=
 \begin{bmatrix}
\Am^c_l (\Am^c)^T_l & \Am_l \Am^T_l \\
\Am_l \Am^T_l & \Am^c_l (\Am^c)^T_l \\
\end{bmatrix} \, (\because \text{by } \Cref{lem: AAc_orth})\\
 &=
 \begin{bmatrix}
\Id - \Am_l \Am^T_l & \Am_l \Am^T_l \\
\Am_l \Am^T_l & \Id - \Am_l \Am^T_l
\end{bmatrix}.  
\end{align*}
Note that $\Am_l \Am_l^T$ is a binary diagonal matrix with 1's and 0's at the diagonal, since
\begin{align*}
\Am_l \Am_l^T &= \otimes_p \sigma_p \sigma_p^T, \quad \sigma_p \sigma_p^T \in \{\Sp\Sm, \Sm\Sp, \Id \}.
\end{align*}
Therefore, ${\Um}_{i,1}$ is a permutation matrix of size $2^{n+1} \times 2^{n+1}$. Any permutation matrix can be implemented using only Toffoli gates~\cite{nielsen2010quantum}.  In particular, only a single $C^n X$ gate is required to implement ${\Um}_{l,1}$. 

Let the binary representation of row $r = \sum_{p=1}^{n} 2^{n-p} q(p)$, where $q(p) \in \{0, 1\}$. If row $r$ of  $\Am_l \Am_l^T$ is non-zero, then permute row $r$ of 
$\Id_{2^{n+1}}$ to row $r' \coloneqq 2^{n} +r$. Note that the binary representation of rows $r$ and $r'$ differ only by one bit. Thus, this permutation can be performed using a single $C^nX$ gate. The choice of the control operation (ope-control or closed-control) on the \textit{n}-qubits depends on the rows to be permuted as follows.
\begin{align*}
(\Am_l \Am_l^T)_{r, r} = 1 \quad \text{iff } (\sigma_p \sigma_p^T)_{q(p), q(p)} &= 1, \quad \forall p=1,\dots, n.
\end{align*}
If $(\sigma_p \sigma_p^T)_{0,0} = 1$ \Bigg(i.e., $\sigma_p \sigma_p^T = \begin{bmatrix}
1 & 0\\
0 & *
\end{bmatrix}$\Bigg) it corresponds to open-control on the $p$-th qubit and if $(\sigma_p \sigma_p^T)_{1,1} = 1$ \Bigg(i.e., $\sigma_p \sigma_p^T = \begin{bmatrix}
* & 0\\
0 & 1
\end{bmatrix}$\Bigg) it corresponds to closed-control on the $p$-th qubit. If both $(\sigma_p \sigma_p^T)_{0,0} = (\sigma_p \sigma_p^T)_{1,1} = 1$ then no control is necessary. In this fashion, we can design a $C^nX$ gate to implement ${\Um}_{l,1}$. 

Finally, by \Cref{thm: Acomp}, ${\Um}_{l,2}$ is only involves tensor products of $\sigma_x, \Id$, and thus can be implemented efficiently using the single qubit gate $\sigma_x$.
\end{proof}

\section{Discussion}
\label{sec: discussion}

\subsection{Relation to unitary dilation}

Our technique can be seen as a form of unitary dilation as it works by constructing a unitary matrix that operates on vectors in a higher dimensional subspace. The standard unitary dilation on $\Am_l$ when it is a contraction is defined as follows.
\begin{definition}
Let $\Am$ be contraction, i.e., $\|\Am\|_2\leq 1$, then the unitary dilation $\Um_{\Am}$ of $\Am$ is defined as:
\begin{equation}
\Tilde{\Um}_{\Am}=\left(
            \begin{array}{cc}
              \Am & \Dm_{\Am^{*}} \\
              \Dm_{\Am} & -\Am^{*} \\
            \end{array}
          \right), \label{eq:ud}
\end{equation}
where, $\Dm_{\Am}=\sqrt{\In-\Am^{*}\Am}$. 
\label{def:dilation}
\end{definition}
Note that $\Dm_{\Am}$ is well defined as $\In-\Am^{*}\Am\geq 0$ is a positive semi-definite matrix under the contraction assumption. As per the Sz.-Nagy dilation theorem, every contraction on a Hilbert space has a unitary dilation which is unique up to an unitary equivalence \cite{schaffer1955unitary}.

The application of $\tilde{\Um}_l \coloneqq \tilde{\Um}_{\Am_l}$ on an ancilla system of the form $\ket{0} \ket{\psi}$ gives,
\begin{equation*}
\tilde{\Um}_{l}|0\ra|\psi\ra=|0\ra \Am_l  |\psi\ra+|1\ra \Dm_{\Am_l} |\psi\ra,
\end{equation*}
similar to the expression in \Cref{eq: Umact}. We next discuss the relationship between the two dilation matrices $\tilde{\Um}_l$ as defined above and the unitary completion matrix we introduced in \Cref{eq: Um_l} and show that our matrix $\Um_l$ requires a shallow circuit to implement. 

For $\Am_l$ given as tensor product over sigma basis, $(\In-\Am^*_l\Am_l)^2=\In-\Am^T_l\Am_l$ using \Cref{cor: prop_A}. Thus, we can simplify $\tilde{\Um}_l$ as
\begin{equation*}
\tilde{\Um}_{l}=\begin{bmatrix}
	\Am_l &\In-\Am_l\Am_l^T	\\
              \In-\Am^{T}_l\Am_l  & -\Am^{T}_l  \\
            \end{bmatrix}.
\end{equation*}
Note that $\tilde{\Um}_{l}$ is unitary for the sigma basis even without negative sign in the (2,2) block position. To simplify analysis, we ignore the negative sign as it is unitarily equivalent (up to a controlled $\sigma_z$ gate). With this, $\tilde{\Um}_{l}$ is again a permutation matrix and can be expressed as a sequence of Toffoli gates. 

Consider the case when, $\Am_l = \otimes_k \sigma_k$ where $\sigma_k = \{\Sp\Sm, \Sm\Sp, \Id \} \subset S$. Then $\Am^T_l = \Am_l$,  $\Am_l\Am_l^T = \Am_l$ and $\Id - \Am_l = \bar{\Am}_l - \Am_l = \Am_l^c$. Thus, $\tilde{\Um}_l = \Um_l(\sigma_x \otimes \Id)$. Thus, we need again require one $C^n X$ gate to implement $\tilde{\Um}_l$.
 
The other extreme case is when $\Am_l = \otimes_k \sigma_k$ where $\sigma_k \in \{\Sp, \Sm\} \subset S$. Then, $\Am_l$ is a binary matrix with a single non-zero entry. Let's say row $r = \sum_{p=1}^{n} 2^{n-p} q(p)$ of $\Am_l$ has that non-zero entry. Here, $q(p) \in \{0, 1\}$. Consider row $r' = 2^{n} + \sum_{p=1}^{n} 2^{n-p} (1-q(p))$. We can construct  $\tilde{\Um}_l$ by permuting rows $r$ and $r'$ of matrix  $\sigma_x \otimes \Id = \begin{bmatrix}
& \Id \\
\Id &
\end{bmatrix}$. Note that the binary representation of $r$ and $r'$ differ in every bit. Each $C^nX$ gate only permutes rows that differ by a single bit. Thus, in order to permute rows $r$ and $r'$, we need $n$ $C^nX$ gates and \textit{another} set of $n-1$ $C^nX$ gates to un-permute the rows that were disturbed.  Thus, this requires a total of $2n-1$ $C^nX$ gates. This is in contrast to our unitary completion based technique, where a single $C^nX$ gate was sufficient to encode the permutation matrix $\Um_{l,1}$, see \Cref{thm: prop_sig}.

To illustrate this, consider a simple example with $\Am_l = \Sm$. In this case, with a single control bit, we have a $C^1X = CNOT$ gate.  
\begin{align*}
\tilde{\Um}_l &= \begin{bmatrix}
\Sm  & \Sp\Sm\\
\Sm\Sp & \Sp\\
\end{bmatrix} = \begin{bmatrix}
0&0&1&0\\
1&0&0&0\\
0&0&0&1\\
0&1&0&0\\
\end{bmatrix} = \begin{bmatrix}
1&0&0&0\\
0&0&1&0\\
0&1&0&0\\
0&0&0&1\\
\end{bmatrix}(\sigma_x \otimes \Id). 
\end{align*}
\begin{align*}
\Um_l &=  \begin{bmatrix}
I-\Sm\Sp & \Sm\Sp\\
\Sm\Sp  & I-\Sm\Sp
\end{bmatrix} (\Id \otimes \sigma_x) = \begin{bmatrix}
1&0&0&0\\
0&0&0&1\\
0&0&1&0\\
0&1&0&0
\end{bmatrix} (\Id \otimes \sigma_x).
\end{align*}

$\tilde{\Um}_l $ corresponds to the SWAP gate (which can be implemented using 3 CNOT gates) and a NOT gate. On the other hand, $\Um_l$ can be implemented using a NOT and CNOT gate. The corresponding circuits are shown in the \Cref{fig: U1}. 

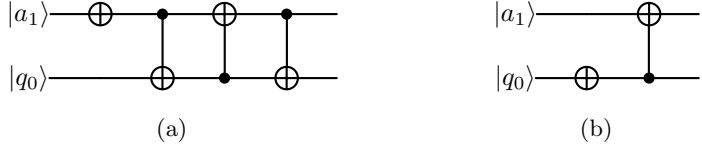
\begin{figure}
\centering
\begin{subfigure}[t]{0.35\textwidth}
\centering
\begin{quantikz}[wire types = {q, q}, transparent]
\ket{a_1} & \targ{} &\ctrl{1} & \targ{} &\ctrl{1} &  \\
\ket{q_0} &          & \targ{} & \ctrl{-1} & \targ{} & 
\end{quantikz} 
\caption{}
\end{subfigure}
~
\begin{subfigure}[t]{0.35\textwidth}
\centering
\begin{quantikz}[wire types = {q, q}, transparent]
\ket{a_1} & & \targ{} &  \\
\ket{q_0} & \targ{} &\ctrl{-1} &
\end{quantikz} 
\caption{}
\end{subfigure}
\caption{Circuits for implementing (a) $\tilde{\Um}_l$ and (b) $\bar{\Um}$  when $\Am_l = \Sm$.}
\label{fig: U1}
\end{figure}

In general, we expect our dilation operator based on unitary completion to be  more efficient than the standard unitary dilation operator. Future work will consider a more rigorous analysis to conclude this. 

\subsection{Relation to Pauli basis}

The Pauli matrices lie in the span of the sigma basis. Thus, the sigma basis set $S$ defined in \Cref{sec: method} is also an universal basis. Note that, out of the five basis matrices in $S$, only four are linearly independent, i.e. $S$ in an overcomplete basis. For example, $\Id = \Sp\Sm + \Sm\Sp$ and can be ignored. However, often times it is advantageous to keep $\Id$ in the basis set as it can lead to fewer terms in the decomposition as illustrated for the Heat equation matrix in~\Cref{subsec: heat}. Similarly, any Pauli matrix can be added to $S$ as relevant to the problem. The idea of constructing the unitary matrix in ~\Cref{eq: Um_l} can be extended for such cases as well. This gives one the flexibility to freely mix the Pauli and sigma basis terms to get an efficient tensor product decomposition for the problem at hand. 

Using Pauli basis, the number of terms $n_l$ in the LCU decomposition (\Cref{eq:LCU}) in general can vary from $N$ (diagonal matrices) to $N^2$ (dense matrix). However, for structured sparse matrix, e.g. arising in discretization of the Heat equation (see ~\Cref{eq: full_coeff}), there are no such theoretical estimates available. Therefore, we used the matrix splicing based numerical method proposed in \cite{lcunum} for the LCU decomposition of the matrix (\Cref{eq: full_coeff}) in the Pauli basis. ~\Cref{tab:lcu_terms} compares the number of terms for the Pauli and the number of terms (determined theoretically as discussed in the Section \Cref{subsec: heat}) for the sigma basis for different matrix size $N$. It is evident from these results that for structured sparse matrices one can expect the tensor product decomposition based on the sigma basis to provide significant advantage.

\begin{table}
\centering
\begin{tabular}{ccccc}\toprule
$n_x$ & $n_t$ & Matrix size ($N=n_xn_t$) & Pauli & Sigma \\
\midrule
4 & 4 & 16 & 26 & 19 \\
4 & 8 & 32 & 54 & 21 \\
8 & 8 & 64 & 102 & 25 \\
8 & 16 & 128 & 206 & 27 \\\bottomrule
\end{tabular}
\caption{Comparing the number of terms for the tensor product decomposition in Pauli basis and sigma basis for the Heat equation example.}
\label{tab:lcu_terms}
\end{table}

\subsection{Resource Estimation}
In our approach, the presence of an additional $C^nX$ gate in the quantum circuit for $\Um_l$ (equivalently, $\Am_l$), leads to  a deeper circuit for evaluating the local/global cost functions  as compared to using the Pauli basis. Unitary operators are often implemented using elementary universal gate sets such as CNOT and single-qubit gates. Efficient implementation of the $C^nX$ gate is an active area of research. Any implementation of the gate in terms of universal elementary gate sets must have a depth of $\Omega(\log n)$ and size (number of gates) $\Omega(n)$~\cite{10.5555/2011679.2011682}. In a recent paper~\cite{nie2024quantum}, the authors provide an implementation of the $C^nX$ gate using a quantum circuit consisting of single qubit gates and CNOT gates resulting in a circuit of depth atmost $\mathcal{O}(\log n)$ and size $\mathcal{O}(n)$ with 1 ancilla qubit. These results indicate that our quantum circuit can be implemented efficiently with only a minimal increase in the number of gates and circuit depth as compared to the Pauli basis approach. It is interesting to note that in future it may be possible to implement $C^nX$ gates directly on trapped ions, Rydberg atoms and superconducting circuit based quantum devices ~\cite{katz2022n}. 

\subsection{Comparison with Bell Measurement Approach}

The Bell measurement based quantum circuits (as discussed in the ~\Cref{sec: sigma}) require measurement of every qubit to compute the expectation values arising in global cost function. This increases the measurement overhead and cost of classical post-processing. On the other hand, our unitary completion approach relies, independent of the problem size, only on the measurement of two ancilla bits. Note that our approach requires an additional ancilla bit as compared to the Bell measurement approach. Furthermore, compared to our approach which can be applied for both the global/local cost function evaluation, there is no straightforward extension of the Bell measurement approach for computing the expectation terms $\delta_{ijk}$ arising in the local VQLS cost function. As discussed in the Introduction, local cost functions are advantageous to use in VQAs due to their resiliency to barren plateaus. 

\section{Conclusions}\label{sec: conc}
In this paper we proposed a novel approach for designing efficient quantum circuits for evaluating the global/local VQLS cost functions for matrices given in form of LCU type tensor product decomposition in the sigma basis. The sigma basis better exploits the sparsity and underlying structure of the matrices such as those arising from PDE discretization (e.g. Heat equaiton), leading to number of tensor product terms which scale only logarithmically with respect to the matrix size. This provides in worst case an exponential advantage over conventionally used Pauli basis. Given the sigma basis is comprised of non-unitary operators, we employed the concept of unitary completion to design quantum circuits for computing the VQLS global/local cost functions, and estimated the quantum resources needed. We compared our approach with other related concepts including unitary dilation and quantum Bell measurements, and discussed various pros/cons. In future it will worthwhile to explore other VQA applications where sigma type basis can be used to gain efficiency in computation of associated cost functions. 

\section{Acknowledgments}
This research was developed with funding from the Defense Advanced Research Projects Agency (DARPA). The views, opinions, and/or findings expressed are those of the author(s) and should not be interpreted as representing the official views or policies of the Department of Defense or the U.S. Government.

\bibliographystyle{unsrt}
\bibliography{references}

\end{document}